\numberwithin{equation}{section}
\DeclareMathOperator{\op}{op}
\newcommand{\ud}{\mathrm{d}}
\newcommand{\ui}{\mathrm{i}}
\newcommand{\ue}{\mathrm{e}}
\newcommand{\be}{\begin{equation}}
\newcommand{\ee}{\end{equation}}
\DeclareMathOperator{\T}{T}
\DeclareMathOperator{\arsinh}{arsinh}
\newcommand{\gz}{\mathbb Z}
\newcommand{\rz}{\mathbb R}
\newcommand{\nz}{\mathbb N}
\newcommand{\kz}{\mathbb C}
\newcommand{\ba}{\begin{aligned}}
\newcommand{\ea}{\end{aligned}}
\newtheorem{lemma}{Lemma}
\newtheorem{prop}{Proposition}
\begin{document}

\thispagestyle{empty}

\begin{center}

{\LARGE\bf The Berry-Keating operator on a lattice}\\

\vspace*{3cm}

{\large Jens Bolte\footnote{Department of Mathematics, Royal Holloway,
University of London, Egham, TW20 0EX, United Kingdom, 
{\tt jens.bolte@rhul.ac.uk}}, 
{\large Sebastian Egger}\footnote{Department of Mathematics,
Technion--Israel Institute of Technology, 629 Amado Building, Haifa 32000, 
Israel, {\tt egger@tx.technion.ac.il }}, and 
Stefan Keppeler\footnote{Fachbereich Mathematik, Universit\"at T\"ubingen, 
Auf der Morgenstelle 10, 72076 T\"ubingen, Germany, 
{\tt stefan.keppeler@uni-tuebingen.de}}}
\end{center}
\vfill
\begin{abstract}
  We construct and study a version of the Berry-Keating operator corresponding
  to a classical Hamiltonian on a compact phase space, which we choose to be 
  a two-dimensional torus. The operator is a Weyl quantisation of the classical
  Hamiltonian for an inverted harmonic oscillator, producing a
  difference operator on a finite, periodic lattice. We investigate
  the continuum and the infinite-volume limit of our model in
  conjunction with the semiclassical limit. Using semiclassical
  methods, we show that only a specific combination of the limits
  leads to a logarithmic mean spectral density as it was anticipated
  by Berry and Keating.
\end{abstract}

\newpage

\section{Introduction}
Phase-space methods in quantum mechanics are often used in a
semiclassical context because they link (pseudo-)differential
operators and functions on a classical phase space, see, e.g.,
\cite{Zworski:2012}. In such a context the phase space is a cotangent
bundle of a manifold (the configuration space) and, hence, is not
compact. In contrast, quantisations on compact phase spaces lead to
operators in finite dimensional Hilbert spaces. The latter approach is
frequently used to quantise symplectic maps on tori
\cite{Bouzouina:1996,Esposti:2003}, the most prominent example being
the cat map \cite{Hannay:1980}.  However, one can also quantise
Hamiltonian flows on tori and therefore obtain a phase-space
representation of a Schr\"odinger equation in a finite dimensional
Hilbert space. In this context the Hamiltonian operator can be
interpreted as a difference operator on a finite lattice with periodic
boundary conditions. If the classical phase space is a two-dimensional
torus one has two length parameters available in addition to the
semiclassical parameter.  In models with difference operators on
periodic lattices these length scales can be used to perform a
continuum limit as well as an infinite volume limit.

In this paper we explore the various limits in a discrete variant of
the Berry-Keating operator that was introduced in
\cite{BerKea99a}. The latter is a differential operator with a
self-adjoint realisation in $L^2(\rz)$ and, therefore, can be
represented in the phase space $\T^\ast\!\rz$. Inspired by the work of
Connes \cite{Con96,Con99}, Berry and Keating intended to define a
self-adjoint operator whose spectrum is related to the non-trivial
zeros of the Riemann zeta function. In particular, the mean spectral
density of this operator should follow the same asymptotic,
logarithmic law as that of the Riemann zeros. To this end they
considered a quantisation of the classical Hamiltonian
$H(q,p)=qp$. The first obstacle that needed to be addressed was that
the naive version of the operator has a purely continuous spectrum,
which is related to the fact that the energy surfaces in phase space
are not compact. For that reason Berry and Keating suggested a
truncation of these energy surfaces \cite{BerKea99a,BerKea99b},
however, without defining an operator that would correspond to that
truncation. Using standard semiclassical techniques, they then showed
that the truncated energy surfaces would lead to the desired form of
the mean spectral density. Later, a modification was suggested
\cite{SieRod11} to overcome the problems with the phase-space
truncation, and this was subsequently improved in \cite{BerKea11}.  
These modifications keep the non-compact phase space $\T^\ast\!\rz$,
but add terms to the classical Hamiltonian in such a way that its energy
surfaces are compact, yet the resulting semiclassical spectral density
has the same leading asymptotic behaviour as before. After quantisation, 
however, the additional terms lead to non-local contributions to the
quantum Hamiltonian.

In other approaches two-dimensional, semi-inverted oscillators were
related to the Riemann zeta function and its zero-count
\cite{BhaKhaReiTom97}, and versions of the Berry-Keating operator on a
half-line as well as on quantum graphs were investigated
\cite{SteEnd10}. The connection of the Berry-Keating operator with
local versions of the Riemann hypothesis was studied in
\cite{Sre11}. More recently, a PT-symmetric, non-hermitian variant of
the Berry-Keating operator \cite{BenBroMue16} as well as a version in
polymeric quantum mechanics \cite{BerMol16} were studied.

Our proposal differs from previous approaches in that we use a
two-dimensional torus as phase space, but keep the classical
Hamiltonian (up to a linear canonical transformation). Weyl
quantisation on this compact phase space will immediately produce a
self-adjoint operator in an $N$-dimensional Hilbert space that,
therefore, will have a discrete, even finite, spectrum.  The number
$N$ of eigenvalues tends to infinity in the semiclassical limit, which
in this context is well known to be the limit $N\to\infty$. The model
allows us to study various limits of the operator and its spectrum,
including semiclassical, continuum and infinite-volume limits, or
combinations thereof. We also use semiclassical methods that are
mainly based on a Gutzwiller trace formula which we proved previously
\cite{BEK16}. In that context we identify a way to choose the length
scales of the torus in such a way that they produce the same cut-off
as it was imposed in \cite{BerKea99a}. As a consequence, the
semiclassical approximation of the spectral density is the same as in
\cite{BerKea99a}, except for a factor of two whose origin we explain in
Section~\ref{sec:semiclassics}.

The paper is organised as follows. In Section~\ref{sec:mod} we review
the model of Berry and Keating and then introduce our model. In the
next section we analyse the continuum and the infinite-volume
limit. This is first done very explicitly in the example of the
quantisation of the symbol $\xi^2$, and then qualitatively for our
model. Section~\ref{sec:semiclassics} is devoted to a semiclassical
calculation of the spectral density in our model. The result is
compared to numerical data in the following section. We present our
conclusions in Section~\ref{sec:conclusions}. Two appendices contain a
proof of a lemma in Section~\ref{sec:mod} and the calculation of the
matrix elements of our model operator, respectively.

\section{The model}
\label{sec:mod}
Berry and Keating \cite{BerKea99a,BerKea99b} consider the operator
\begin{equation}
\label{eq:BKoperator}
 H_{\rm BK} = \frac{\hbar}{\ui}x\frac{\ud}{\ud x}+\frac{\hbar}{2\ui},
\end{equation}
which is a (Weyl) quantisation of the symbol $h_{\rm BK}(q,p)=qp$
on the phase space $\T^\ast\!\rz$. Their intention is to count the 
number of eigenvalues in spectral intervals,
however, the operator $H_{\rm BK} $ defined on a suitable domain in
$L^2(\rz)$ has a continuous spectrum and no eigenvalues. They, therefore,
suggest a cut-off in phase space to enforce the energy surfaces
$h_{\rm BK}^{-1}(E)\subset\T^\ast\!\rz$ to be compact. One would expect a
modification of the operator \eqref{eq:BKoperator} that reflects the
truncation of the energy surfaces to possess a purely discrete
spectrum. However, no definition of an operator is given.

\begin{figure}[t]
\begin{center}
\resizebox{.32\textwidth}{!}{\input{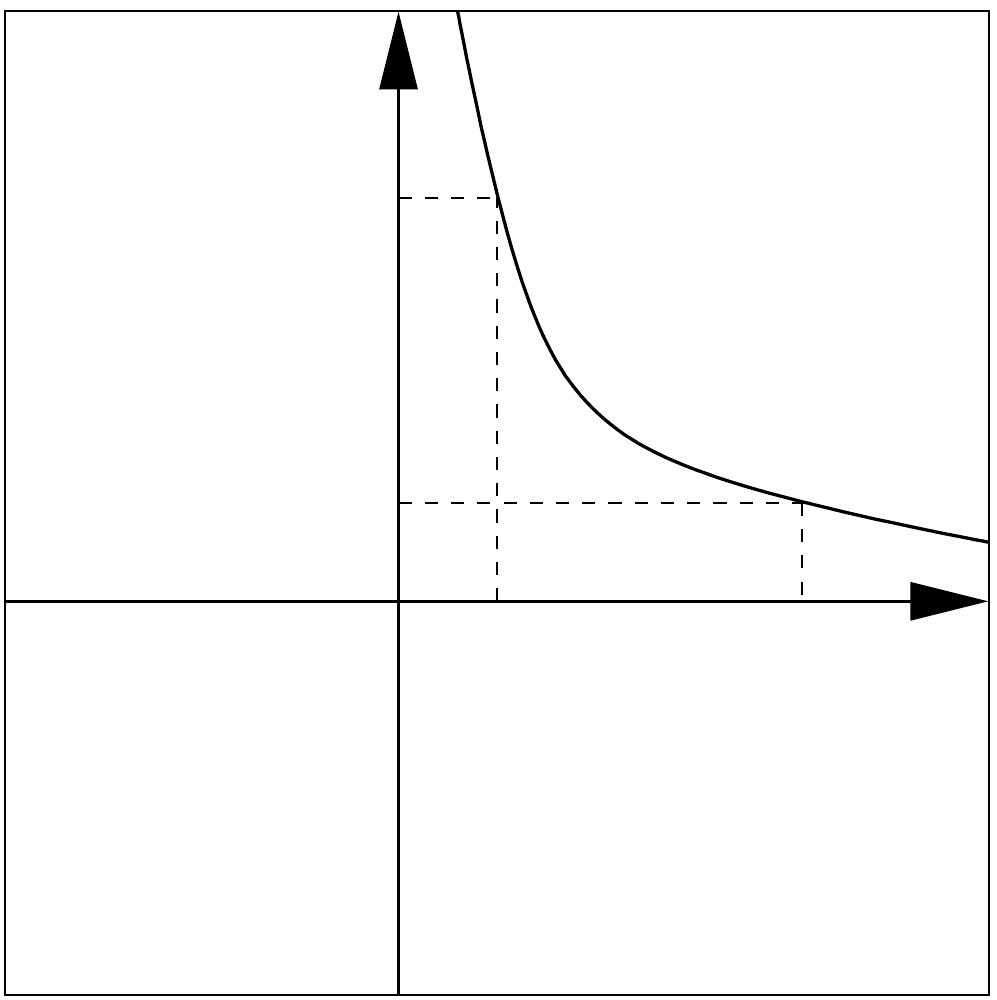_t}}
\resizebox{.32\textwidth}{!}{\input{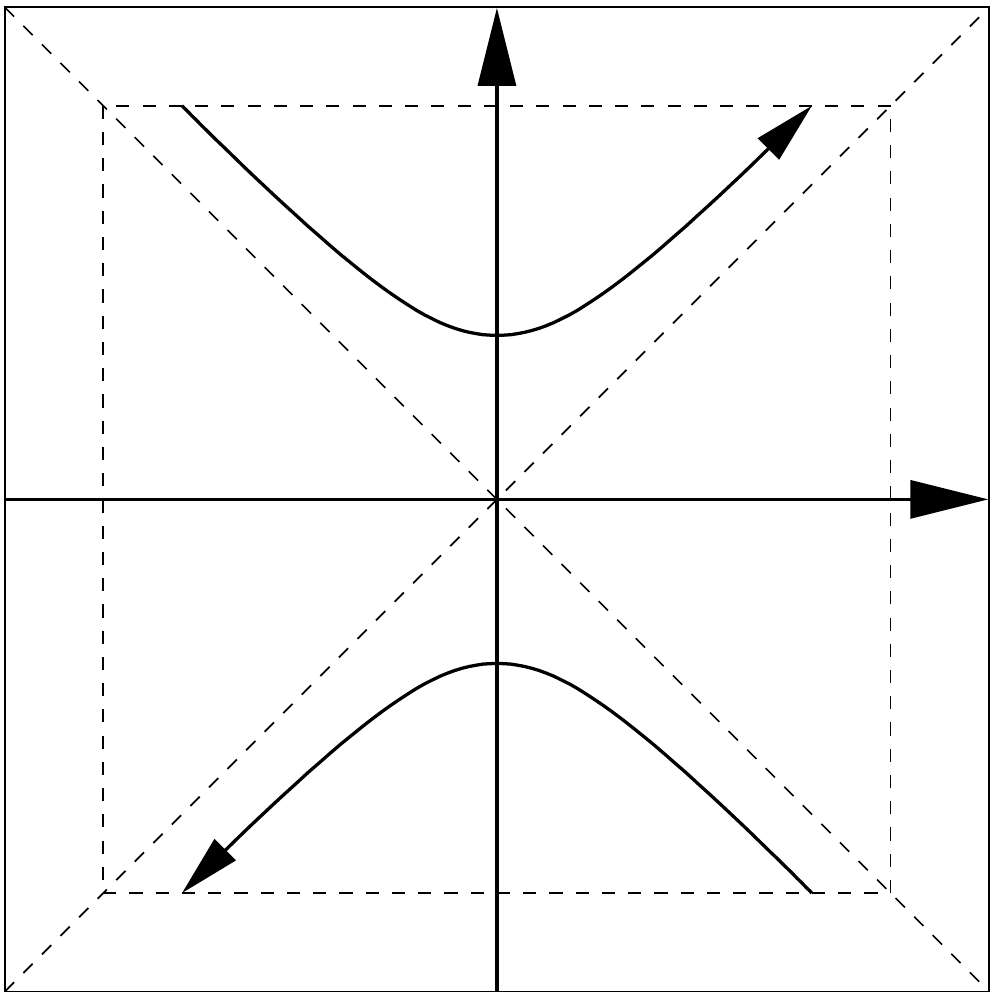_t}}
\resizebox{.32\textwidth}{!}{\input{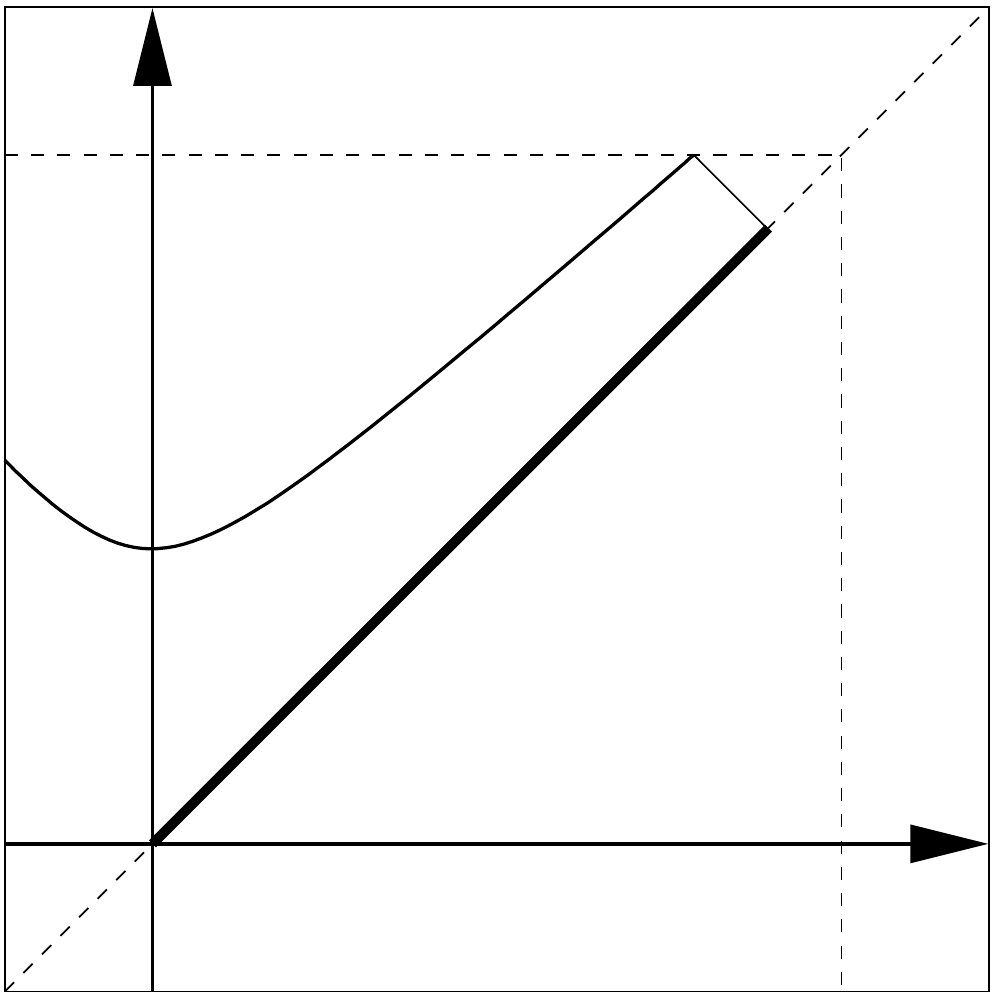_t}}
\caption{Energy surfaces (a) for the classical Hamiltonian
  corresponding to the Berry-Keating operator, with cut-off parameters
  $L_q$ and $L_p$, and (b) for the rotated Hamiltonian
  \eqref{eq:classhamilton}, with fundamental domain characterised by
  $\ell_x$ and $\ell_\xi$; (c) relation between $L_{q,p}$ and
  $\ell_{x,\xi}$, see Sec.~\ref{sec:semiclassics}.}
\label{fig:energy_contours}
\end{center}
\end{figure}

The cut-off procedure they suggest is to only consider the subset
\begin{equation}
\label{eq:pstrunc}
  \{(q,p)\in\T^\ast\!\rz;\ q\geq L_q,\ p\geq L_p\},
\end{equation}
of phase space, where $L_q>0$ and $L_p>0$ are parameters satisfying
$L_q L_p=2\pi\hbar$. Before the truncation the energy surface consists
of the points $(q,p)\in\T^\ast\!\rz$ on the two branches of the
hyperbola $qp=E$.  After the truncation only the part in the first
quadrant satisfying
\begin{equation}
L_q\leq q\leq\frac{E}{L_p}\quad\text{and}\quad L_p\leq p\leq\frac{E}{L_q}
\end{equation}
remains, see Figure~\ref{fig:energy_contours}(a). This has finite
volume, suggesting that a related operator would possess a purely
discrete spectrum. Berry and Keating also propose that one should
identify $p$ with $-p$ and/or $q$ with $-q$ in a yet to be specified
way.

With the last suggestion in mind we propose to `compactify' phase space
independent of $E$ in that $\T^\ast\!\rz$ is replaced by a torus, $\rz^2/\Gamma$,
where $\Gamma\cong\gz^2$ is a lattice. Any observable then has to be a 
function on $\rz^2$ that is periodic with respect to $\Gamma$. If the lattice
were chosen such that it acts on points in $\rz^2$ as 
$(q,p)\mapsto(q+na,p+mb)$, with some fixed parameters $a,b>0$ and
integers $n,m$, observables would have to be functions $f$ that are periodic
in $q$ and $p$. The symbol  $h_{\rm BK}(q,p)=qp$ clearly does not have this
property, so one would need to restrict it to a fundamental domain of $\Gamma$,
say, $0\leq q\leq a$, $0\leq p\leq b$, and extend it periodically to $\rz^2$.
The result would be a function on $\rz^2$ that is not continuous.

One can achieve an improvement by introducing new coordinates,
\begin{equation}
\label{eq:cantrafo}
 x=\frac{p-q}{\sqrt{2}}\quad\text{and}\quad\xi=\frac{p+q}{\sqrt{2}}.
\end{equation}
The map $(q,p)\mapsto(x,\xi)$ is symplectic (a canonical transformation),
and the symbol $h_{\rm BK}$ is transformed to
\begin{equation}
\label{eq:classhamilton}
 h(x,\xi)=\frac{\xi^2 -x^2}{2}.
\end{equation}
One then defines the lattice $\ell_x\gz\oplus\ell_\xi\gz$ acting
on $\rz^2$ as $(x,\xi)\mapsto(x+n\ell_x,\xi+m\ell_\xi)$, $n,m\in\gz$. Restricting
$h(x,\xi)$ to a fundamental domain $(-\ell_x/2,\ell_x/2)\times (-\ell_\xi/2,\ell_\xi/2)$
of the lattice and extending it periodically to $\rz^2$ then gives a continuous,
albeit not differentiable, function. Hence, the Fourier series for $h$ will have
improved convergence properties over the Fourier series for $h_{\rm BK}$.
As opposed to some alternative suggestions this procedure maintains the
symmetry in $x$ and $\xi$, at least when choosing $\ell_x=\ell_\xi$. A fundamental
domain of the lattice as well as an energy surface of $h$ are shown in
Figure~\ref{fig:energy_contours}(b).

Returning to operators defined in $L^2(\rz)$, a (Weyl) quantisation of the 
symbol \eqref{eq:classhamilton} produces the Hamiltonian operator for an 
inverted oscillator,
\begin{equation}
\label{eq:quantHamilton}
 H=-\frac{\hbar^2}{2}\frac{\ud^2}{\ud x^2}-\frac{1}{2}x^2,
\end{equation}
that one can define on the domain $C_0^\infty(\rz)$. 
\begin{prop}
\label{prop:specprop}
The operator $H$ has the following properties:
\begin{itemize}
\item[(i)] Defined on the domain $C_0^\infty(\rz)$, it is essentially self-adjoint.
(For simplicity we denote its self-adjoint extension also as $H$.)
\item[(ii)] It is unitarily equivalent to the operator $H_{\rm BK}$ defined in
\eqref{eq:BKoperator}.
\item[(iii)] $H$ has a purely absolutely continuous spectrum, 
$\sigma(H)=\sigma_{\rm ac}(H)=\rz$.
\end{itemize}
\end{prop}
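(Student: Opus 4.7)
My plan is to prove (ii) first by a metaplectic argument, then obtain (iii) by diagonalising $H_{\rm BK}$ directly via a Mellin/logarithmic change of variables, and finally derive (i) either independently from a Weyl limit-point analysis or, more efficiently, from (ii) together with essential self-adjointness of $H_{\rm BK}$ on a natural core.

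\textbf{Step 1 (Item (ii)).} The map \eqref{eq:cantrafo} is a symplectic rotation by $\pi/4$, so by the metaplectic representation there exists a unitary $U$ on $L^2(\rz)$ (realisable as a fractional Fourier transform with Mehler-type kernel, or as $U=\ue^{-\ui\pi N/4}$ for $N$ the harmonic-oscillator number operator with $\hbar=1$) such that, on the Schwartz space as a common invariant core,
\begin{equation}
U\,Q\,U^{*}=\frac{P-Q}{\sqrt{2}},\qquad U\,P\,U^{*}=\frac{P+Q}{\sqrt{2}},
\end{equation}
with $Q$ and $P=\frac{\hbar}{\ui}\frac{\ud}{\ud x}$ the standard position and momentum operators. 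Since $H_{\rm BK}=\frac{1}{2}(QP+PQ)$ is the Weyl quantisation of $qp$, a one-line computation using $[Q,P]=\ui\hbar$ gives
\begin{equation}
U\,H_{\rm BK}\,U^{*}=\tfrac{1}{2}\bigl[(P-Q)(P+Q)+(P+Q)(P-Q)\bigr]/2=\tfrac{1}{2}(P^{2}-Q^{2})=H,
\end{equation}
on the Schwartz core, hence on the closures. This yields (ii).

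\textbf{Step 2 (Item (iii)).} I would diagonalise $H_{\rm BK}$ by orthogonal decomposition $L^{2}(\rz)=L^{2}(\rz_{+})\oplus L^{2}(\rz_{-})$, which reduces $H_{\rm BK}$ because $x\frac{\ud}{\ud x}+\frac{1}{2}$ preserves the sign of $x$ and vanishes at $x=0$ in a way that allows the standard Mellin extension. On $L^{2}(\rz_{+},\ud x)$ the substitution $x=\ue^{t}$ together with conjugation by multiplication with $\ue^{t/2}$ (the Jacobian of the measure change) intertwines $H_{\rm BK}$ with $-\ui\hbar\frac{\ud}{\ud t}$ on $L^{2}(\rz,\ud t)$. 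Fourier transform then turns this into multiplication by $\hbar\tau$ on $L^{2}(\rz,\ud\tau)$, which is self-adjoint with purely absolutely continuous spectrum $\rz$. The same analysis on $\rz_{-}$ doubles the multiplicity, and by (ii) the spectrum of $H$ coincides with that of $H_{\rm BK}$, giving $\sigma(H)=\sigma_{\rm ac}(H)=\rz$.

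\textbf{Step 3 (Item (i)).} Essential self-adjointness on $C_{0}^{\infty}(\rz)$ can be obtained directly from the Sturm-Liouville formulation: the equation $-\frac{\hbar^{2}}{2}u''-\frac{1}{2}x^{2}u=\pm\ui u$ admits, by a straightforward WKB analysis, only oscillatory solutions behaving like $|x|^{-1/2}\exp(\pm\ui x^{2}/(2\hbar))$ as $|x|\to\infty$, none of which is in $L^{2}$ near $\pm\infty$. Hence both endpoints are in the limit-point case and $H$ is essentially self-adjoint on $C_{0}^{\infty}(\rz)$ by Weyl's criterion; alternatively one may cite the general Faris-Lavine/Rellich result for Schr\"odinger operators with potentials bounded below by $-c(1+x^{2})$.

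\textbf{Main obstacle.} The technically most delicate point is the identification of the self-adjoint realisation of $H_{\rm BK}$ used in step 2: one must verify that the naive direct-sum diagonalisation over $\rz_{\pm}$ really recovers the unique self-adjoint extension that is unitarily equivalent to $H$ via $U$, and in particular that no boundary condition at $x=0$ has to be chosen. A clean way to bypass this is to reverse the logical order and read (iii) off from the spectrum of $H$, computed in step 2 with the roles of $H$ and $H_{\rm BK}$ interchanged; but the Mellin picture is cleaner on the first-order operator and is therefore the route I would take, together with a careful check that smooth functions vanishing in a neighbourhood of $0$ form a core.
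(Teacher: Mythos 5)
Your proposal is correct, and for the most part it follows the same route as the paper, which however delegates most of the work to the literature: for (ii) the paper uses precisely your metaplectic conjugation, realised as $\ue^{\ui\pi H_{\rm osc}/(4\hbar)}$ with $H_{\rm osc}$ the harmonic oscillator (your fractional Fourier transform $\ue^{-\ui\pi N/4}$); for (i) it cites a theorem of Weidmann, which is the Faris--Lavine/limit-point statement you invoke; and for (iii) it cites Perry and Weidmann for the purely absolutely continuous spectrum of the dilation generator $H_{\rm BK}$, whose proofs are exactly your Mellin diagonalisation on $L^2(\rz_+)\oplus L^2(\rz_-)$. The "main obstacle" you flag in Step 2 is not an obstacle: the deficiency equations $\frac{\hbar}{\ui}(xu'+\tfrac12 u)=\pm\ui u$ have the solutions $|x|^{-1/2\mp 1/\hbar}$ on each half-line, none of which is square-integrable on $(0,\infty)$ (one fails at $0$, the other at $\infty$), so $H_{\rm BK}$ is already essentially self-adjoint on $C_0^\infty(\rz\setminus\{0\})$, the half-lines reduce its closure, and no boundary condition at $x=0$ can arise. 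One small inaccuracy in Step 3: for non-real spectral parameter $\lambda$ the WKB amplitude is not $|x|^{-1/2}$ but $|x|^{-1/2}\,|x|^{\mp\im\lambda/\hbar}$, so exactly one of the two solutions \emph{is} square-integrable near each infinite endpoint; your claim that none is would in fact contradict the Weyl alternative. This does not damage the argument, since the limit-point property only requires that not all solutions lie in $L^2$ near the endpoint, and limit point at both ends still gives deficiency indices $(0,0)$; alternatively the Faris--Lavine route you mention settles (i) immediately.
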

\begin{proof}
The proof uses standard arguments:
\begin{itemize}
\item[(i)] This property follows from \cite[Satz 17.15] {Weidmann:2003}.
\item[(ii)] The symbol $h$ of the operator $H$ was obtained from the symbol
$h_{\rm BK}$ of the operator $H_{\rm BK}$ through the linear
symplectic map \eqref{eq:cantrafo}.  A combination of
\cite[Eq.~(2.1)]{Folland:1989} with \cite[Eq.~(4.23)]{Folland:1989}
therefore yields that $H$ and $H_{\rm BK}$ are unitarily
equivalent. More specifically, by referring to the Heisenberg
equations of motion for a harmonic oscillator one can confirm that,
indeed,
\begin{equation}
 \ue^{\ui\frac{\pi}{4\hbar}H_{\rm osc}}H_{\rm BK}\ue^{-\ui\frac{\pi}{4\hbar}H_{\rm osc}}
 =H,\quad\text{where}\quad 
 H_{\rm osc}=-\frac{\hbar^2}{2}\frac{\ud^2}{\ud x^2}+\frac{1}{2}x^2.
\end{equation}
\item[(iii)] In \cite[Proposition~6.2]{Per83} as well as in 
\cite[Satz 24.6]{Weidmann:2003} it is shown that the spectrum of 
$H_{BK}$ is $\rz$ and is purely absolutely continuous. 
By (ii) the same then holds for the operator $H$.
\end{itemize}
\end{proof}
We now turn to constructing a model operator by Weyl quantising the equivalent
of the symbol \eqref{eq:classhamilton} on the torus $\rz^2/\ell_x\gz\oplus\ell_\xi\gz$.
Hence, for all $n,m\in\gz$ we set
\begin{equation}
\label{eq:hamiltonper}
  h(x,\xi) = \frac{(\xi-m\ell_\xi)^2 - (x-n\ell_x)^2}{2}, \quad \text{if} \quad \begin{cases}
  \left(m-\tfrac{1}{2}\right) \ell_\xi\leq\xi < \left(m+\tfrac{1}{2}\right) \ell_\xi \\
  \left(n-\tfrac{1}{2}\right) \ell_x\leq x < \left(n+\tfrac{1}{2}\right) \ell_x  \end{cases}.
\end{equation}
This function has a representation as a Fourier series,
\begin{equation}
\label{eq:h_Fourier}
 h(x,\xi)=\sum_{n,m\in\gz}h_{mn}\,
 \ue^{2\pi\ui\bigl(\frac{m\xi}{\ell_\xi}-\frac{nx}{\ell_x}\bigr)},
\end{equation}
with coefficients
\begin{equation}
\label{eq:Fouriercoeff}
 h_{mn} = \begin{cases} \frac{\ell_\xi^2 -\ell_x^2}{24},&\ \text{if}\quad (m,n)=(0,0) \\
 \frac{1}{4\pi^2}\left(\ell_\xi^2\frac{(-1)^m}{m^2}(1-\delta_{m0})\delta_{n0}-  
 \ell_x^2\frac{(-1)^n}{n^2}(1-\delta_{n0})\delta_{m0} \right),&\ \text{if}\quad (m,n)\neq(0,0)
 \end{cases}.
\end{equation}
The Weyl quantisation on the torus was introduced in \cite{Hannay:1980}, for 
details see, e.g., \cite{Bouzouina:1996,Esposti:2003}. It assigns to the function 
\eqref{eq:hamiltonper} with Fourier coefficients \eqref{eq:Fouriercoeff}
an operator
\begin{equation}
\label{q34}
 \op_N(h) : =\sum_{m,n\in\gz}h_{mn}\,T^{m,n}
\end{equation}
acting in $\kz^N$. When $h$ is real valued and $\kz^N$ is equipped with the 
standard inner product, $\op_N(h)$ is self-adjoint. Here the unitary operators
\begin{equation}
\label{Weylnm}
 T^{m,n}=\ue^{\ui\pi\frac{nm}{N}}T^{m,0}T^{0,n}
\end{equation}
in $\kz^N$ are defined through the actions
\begin{equation}
\label{q32}
 \left( T^{m,0}\psi\right)_l:=\psi_{l+m} \quad\text{and}\quad
 \left( T^{0,n}\psi\right)_l:=\ue^{-\frac{2\pi\ui ln}{N}}\psi_l
\end{equation}
on $\psi=(\psi_l)\in\kz^N$,
where we use the convention that $\psi_{l+N}=\psi_l$. The operators $T^{m,0}$
and $T^{0,n}$ represent translations in the $x$- and $\xi$-directions,
respectively. The integer $N\in\nz$ is a semiclassical parameter, such that 
$N\to\infty$ is the semiclassical limit. It satisfies the relation
\begin{equation}
\label{eq:quant}
 2\pi\hbar N = \ell_\xi\ell_x .
\end{equation}
Therefore, the Weyl quantisation of the symbol \eqref{eq:hamiltonper} is
\begin{equation}\label{eq:hamitlonweyl}
\begin{split}
 \op_N(h)= \frac{\ell_\xi^2 -\ell_x^2}{24} \,T^{0,0}
  &+ \frac{\ell_\xi^2}{4\pi^2}\sum_{m=1}^\infty \frac{(-1)^m}{m^2}\,
        \bigl(T^{m,0}+T^{-m,0}\bigr) \\
  &- \frac{\ell_x^2}{4\pi^2}\sum_{n=1}^\infty \frac{(-1)^n}{n^2}\,\
         \bigl(T^{0,n}+T^{0,-n}\bigr),
\end{split}
\end{equation}
and acts on a vector $\psi=(\psi_l)\in\kz^N$ as
\begin{equation}
\begin{split}
 \bigl(\op_N(h)\psi\bigr)_l 
  &=  \frac{\ell_\xi^2 -\ell_x^2}{24} \,\psi_l
       + \frac{\ell_\xi^2}{4\pi^2}\sum_{m=1}^\infty \frac{(-1)^m}{m^2}\,
        \bigl(\psi_{l+m}+\psi_{l-m}\bigr) \\
  &\quad- \frac{\ell_x^2}{4\pi^2}\sum_{n=1}^\infty \frac{(-1)^n}{n^2}\,\
         \bigl(\ue^{-\frac{2\pi\ui ln}{N}}+\ue^{\frac{2\pi\ui ln}{N}}\bigr)\psi_l \\
   &=  \frac{\ell_\xi^2}{24} \,\psi_l
       + \frac{\ell_\xi^2}{4\pi^2}\sum_{m=1}^\infty \frac{(-1)^m}{m^2}\,
        \bigl(\psi_{l+m}+\psi_{l-m}\bigr) 
- \frac{1}{2}\left(\frac{l\ell_x}{N}\right)^2\psi_l  .     
\end{split}
\end{equation}
It can hence be seen as a difference operator on a lattice with $N$ points 
and periodic boundary conditions.

The spectrum $\sigma(\op_N(h))$ of $\op_N(h)$ consists of $N$ real eigenvalues 
$E_n$.
\begin{lemma}
\label{specsym}
When $\ell_x=\ell_\xi$ the spectrum of $\op_N(h)$ is symmetric about zero, i.e.,
$E_n\in\sigma(\op_N(h))$ implies that $-E_n\in\sigma(\op_N(h))$.
\end{lemma}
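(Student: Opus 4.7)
My plan is to exhibit a unitary $F$ on $\kz^N$ that conjugates $\op_N(h)$ to $-\op_N(h)$. Since unitarily equivalent operators share their spectra, this immediately gives $\sigma(\op_N(h))=-\sigma(\op_N(h))$, which is exactly the claim.

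The correct candidate for $F$ is dictated by a classical symmetry. When $\ell_x=\ell_\xi=:\ell$, the symbol $h(x,\xi)=(\xi^2-x^2)/2$ satisfies $h(-\xi,x)=-h(x,\xi)$, so $h$ changes sign under the symplectic rotation $(x,\xi)\mapsto(-\xi,x)$ by $\pi/2$. In the torus quantisation this rotation is implemented (up to an overall phase) by the discrete Fourier transform,
\begin{equation*}
 (F\psi)_k=\frac{1}{\sqrt{N}}\sum_{l=0}^{N-1}\ue^{-\frac{2\pi\ui kl}{N}}\psi_l,
\end{equation*}
which I would take as the definition of $F$.

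The core computation is to establish the conjugation rules
\begin{equation*}
 F\,T^{m,0}\,F^{\ast}=T^{0,-m}\quad\text{and}\quad F\,T^{0,n}\,F^{\ast}=T^{n,0}.
\end{equation*}
Both follow from \eqref{q32} by a short calculation using the orthogonality relation $\frac{1}{N}\sum_{l=0}^{N-1}\ue^{2\pi\ui l(j-k)/N}=\delta_{jk}$. Once these are in hand, the argument finishes quickly: with $\ell_x=\ell_\xi$ the constant Fourier coefficient $(\ell_\xi^2-\ell_x^2)/24$ in \eqref{eq:hamitlonweyl} vanishes, and conjugation by $F$ sends the first infinite sum (over shifts in the $x$-direction) to the second (over phase multipliers in the $\xi$-direction) and vice versa, while the coefficients $(-1)^m/m^2$ are invariant under $m\to-m$. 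Since the two sums enter \eqref{eq:hamitlonweyl} with opposite signs, the net effect is $F\op_N(h)F^{\ast}=-\op_N(h)$.

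The only step requiring any real calculation is the verification of the two clock-shift identities; everything else is algebraic rearrangement, so I do not anticipate a serious obstacle. As a consistency check, the same mechanism works in the continuum: $\mathcal{F}H\mathcal{F}^{\ast}=-H$ for the usual Fourier transform, because it exchanges position and momentum and $h$ is odd under this exchange.
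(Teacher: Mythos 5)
Your proof is correct and follows essentially the same route as the paper's Appendix~A: both exhibit the discrete Fourier transform $F$ as a unitary conjugating $\op_N(h)$ to $-\op_N(h)$ via its action on the translation operators $T^{m,0}$ and $T^{0,n}$. The only cosmetic difference is that the paper phrases the swap through the auxiliary operators $\op_N(a)$ and $\op_N(b)$ and the parity $F^2$, whereas you conjugate the series \eqref{eq:hamitlonweyl} directly and absorb the sign $m\to-m$ into the evenness of the coefficients.
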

We prove this lemma in Appendix~\ref{appA2}.

Below we wish to study the distribution of the eigenvalues of $\op_N(h)$ in the 
semiclassical limit $N\to\infty$. We intend our approach to provide an approximation
to the operator \eqref{eq:quantHamilton} and, therefore, keep the value of $\hbar$
fixed. Due to the relation \eqref{eq:quant}, the semiclassical limit $N\to\infty$ can
then be achieved by sending $\ell_\xi$ and/or $\ell_x$ to infinity.

\section{Continuum and infinite volume limit}
\label{sec:limits}
Before studying spectral properties of the operator $\op_N(h)$ in
various combinations of the limits $\ell_\xi\to\infty$ and $\ell_x\to\infty$,
we want to explore their interpretation in some more detail. It is obvious
from the set-up that $\ell_x$ measures the extension of the model in
`configuration space', whereas $\ell_\xi$ is the corresponding measure
in `momentum space'. One may then view vectors $\psi=(\psi_n)\in\kz^N$
as functions evaluated at the $N$ equidistant points
\begin{equation}
 x_n = \frac{n\ell_x}{N}\in \left[-\frac{\ell_x}{2},\frac{\ell_x}{2}\right),
\end{equation}
where $-\frac{N}{2}\leq n<\frac{N}{2}$. The distance between neighbouring 
points can be expressed as
\begin{equation}
\label{xn}
\frac{\ell_x}{N}=\frac{2\pi\hbar}{\ell_\xi}
\end{equation}
by making use of the relation \eqref{eq:quant}. As we keep $\hbar$ fixed,
taking $\ell_\xi\to\infty$ corresponds to a continuum limit, in which difference
operators approximate differential operators. In contrast, taking 
$\ell_x\to\infty$ corresponds to an infinite-volume limit. Both limits are 
semiclassical in the sense of Weyl quantisation on a torus as they imply 
$N\to\infty$.

When a symbol only depends on either $x$ or $\xi$, the spectrum of its
quantisation can be determined explicitly. As a simplification of
\eqref{eq:hamiltonper} we therefore choose 
\begin{equation}
\label{eq:symbol_xi^2}
  a(\xi) = (\xi-m\ell_\xi)^2 ,\quad \text{if} \quad 
  \left( m - \tfrac{1}{2} \right) \ell_\xi \leq \xi 
  < \left( m + \tfrac{1}{2} \right) \ell_\xi \, ,\quad m\in\gz,
\end{equation}
with Fourier expansion
\begin{equation}
\label{eq:Fourier_xi^2}
  a(\xi) = \frac{\ell_\xi^2}{12} + \frac{\ell_\xi^2}{\pi^2} 
             \sum_{n=1}^\infty \frac{(-1)^n}{n^2} 
               \cos\left( \frac{2\pi}{\ell_\xi}n\xi\right).
\end{equation}
Its quantisation is
\begin{equation}
\label{eq:opNa}
  \op_N(a) = \frac{\ell_\xi^2}{12} + \frac{\ell_\xi^2}{2\pi^2} 
             \sum_{n=1}^\infty \frac{(-1)^n}{n^2} \,
               \bigl(T^{n,0}+T^{-n,0}\bigr) .
\end{equation}
We define plane waves $\psi^{(\nu)}=(\psi^{(\nu)}_m)\in\kz^N$ as
\begin{equation}
\label{eq:planewave}
  \psi^{(\nu)}_m 
  = \ue^{\frac{\ui}{\hbar}\frac{\nu\ell_\xi}{N}\frac{m\ell_x}{N}}
  = \ue^{2\pi\ui\frac{\nu m}{N}}.
\end{equation}
Choosing $\nu\in\mathbb{Z}$ with $-\frac{N}{2} \leq \nu < \frac{N}{2}$
ensures that the $N$ momenta satisfy
\begin{equation}
 \frac{\nu\ell_\xi}{N}\in\left[-\frac{\ell_\xi}{2},\frac{\ell_\xi}{2}\right).
\end{equation}
Plane waves can easily be seen to be eigenfunctions of the operators 
$T^{n,0}$,
\begin{equation}
  (T^{n,0}\psi^{(\nu)})_m = \ue^{2\pi\ui\frac{\nu n}{N}} \psi^{(\nu)}_m , 
\end{equation}
and consequently 
\begin{equation}
  (\op_N(a) \psi^{(\nu)})_m
  = \left( \frac{\ell_\xi^2}{12} + \frac{\ell_\xi^2}{\pi^2} 
           \sum_{n=1}^\infty \frac{(-1)^n}{n^2} \,
           \cos\left( 2\pi\frac{\nu n}{N} \right) \right) \psi^{(\nu)}_m
  = \left( \frac{\nu\ell_\xi}{N} \right)^2 \psi^{(\nu)}_m .
\end{equation}
It is convenient to express the eigenvalues 
$E^{N}_\nu = \left( \frac{\nu\ell_\xi}{N} \right)^2$ in several ways, 
\begin{subequations}
\label{eq:eigenvalues_xi^2}
\begin{align}
  \label{eq:eigenvalues_xi^2-lxi_fix}
  E^{N}_\nu 
  &= \ell_\xi^2 \frac{\nu^2}{N^2} \\
  \label{eq:eigenvalues_xi^2-lx_fix}
  &= \left(\frac{2\pi\hbar}{\ell_x}\right)^2 \nu^2 \\
  \label{eq:eigenvalues_xi^2_nixfix}
  &= 2\pi\hbar \frac{\ell_\xi}{\ell_x} \frac{\nu^2}{N} \, .
\end{align}
\end{subequations}
We note in passing that
\begin{equation}
\label{eq:xi^2-spectral_support}
  \inf\sigma(\op_N(a)) = 0 \qquad\text{and}\qquad
  \sup\sigma(\op_N(a)) = \left\{ \begin{matrix} 
  \frac{\ell_\xi^2}{4} & \text{for $N$ even} \\[.5ex]
  \frac{\ell_\xi^2}{4} \left(\frac{N-1}{N}\right)^2 & 
    \text{for $N$ odd} \end{matrix} \right. ,
\end{equation}
and that the nearest-neighbour separation of eigenvalues 
(for non-negative $\nu$) is
\begin{subequations}
\label{eq:differences_xi^2}
\begin{align}
  \label{eq:differences_xi^2-lxi_fix}
  s^{N}_\nu 
  = E^{N}_{\nu+1} - E^{N}_\nu 
  &= \ell_\xi^2 \, \frac{2\nu+1}{N^2}\\
  \label{eq:differences_xi^2-lx_fix} 
  &= \left(\frac{2\pi\hbar}{\ell_x}\right)^2 (2\nu+1)\\
  \label{eq:differences_xi^2-nixfix} 
  &= 2\pi\hbar \frac{\ell_\xi}{\ell_x} \frac{2\nu+1}{N}.
\end{align}
\end{subequations}
We now discuss the behaviour of the spectrum in the continuum-
and/or infinite-volume limit:
\begin{enumerate}
\item {\it Continuum limit in finite volume}, i.e., $\ell_x$ fixed, $\ell_\xi\to\infty$: 
From \eqref{eq:eigenvalues_xi^2-lx_fix} one sees that the eigenvalues,
even before taking the limit, are the same as eigenvalues of the operator
$-\hbar^2\frac{\ud^2}{\ud x^2}$ on an interval $[-\ell_x/2,\ell_x/2]$ with
periodic boundary conditions. The only effect of the limit $N\to\infty$ hence
is a growing number of eigenvalues, eventually covering the entire spectrum
of the differential operator. The separation of neighbouring eigenvalues 
\eqref{eq:differences_xi^2-lx_fix} is unaffected. 
\item {\it Infinite volume limit on the lattice 
$\frac{2\pi\hbar}{\ell_\xi}\mathbb{Z}$}, i.e., $\ell_\xi$ fixed, $\ell_x\to\infty$: 
In this limit the expression on the right-hand side of \eqref{eq:opNa} for the
operator $\op_N(a)$ is unaffected by the limit; $\op_N(a)$ remains 
a difference operator, albeit on a growing lattice that tends to an infinite
lattice in the limit. From \eqref{eq:xi^2-spectral_support} we see that 
the spectrum remains in the finite interval $[0,\ell_\xi^2/4]$, whereas
\eqref{eq:differences_xi^2-lxi_fix} shows that the eigenvalues become 
dense everywhere in this interval.
\item {\it Continuum limit in infinite volume}, i.e., $\ell_x\to\infty$ and 
$\ell_\xi\to\infty$: Combining the two previous cases one might expect
a convergence (in a suitable sense) of the difference operator $\op_N(a)$ 
to the differential operator $-\hbar^2\frac{\ud^2}{\ud x^2}$ acting in $L^2(\rz)$.
The spectrum is expected to approach a `continuous' spectrum 
$[0,\infty)$. Indeed, \eqref{eq:xi^2-spectral_support} implies that
$\sup\sigma(\op_N(a))\to\infty$ in this case, but whether the
eigenvalues become dense depends on how fast $\ell_x$ and $\ell_\xi$
go to $\infty$, and the answer can be different in different ranges. 
To illustrate this we set 
\begin{equation}
  \ell_\xi = A N^\alpha \, , \quad \ell_x = B N^{1-\alpha}
\end{equation}
with $\alpha\in(0,1)$ and constants $A,B$ satisfying $AB=2\pi\hbar$
in order to comply with \eqref{eq:quant}. Then according to
\eqref{eq:eigenvalues_xi^2_nixfix} and \eqref{eq:differences_xi^2-nixfix},
\begin{equation}
 E^{N}_\nu  = A^2 N^{2\alpha-2}\,\nu^2 \quad\text{and}\quad
 s^{N}_\nu = A^2 N^{2\alpha-2}\,(2\nu+1).
\end{equation}
Thus, at a fixed range in the spectrum, i.e.\ when $\nu$ grows like $N^{1-\alpha}$, 
the eigenvalues become dense for every $\alpha>0$. Near the supremum 
of the spectrum, i.e., when $\nu$ grows proportional to $N$, the eigenvalues only 
become dense if $\alpha<\frac{1}{2}$, i.e., if $\ell_x$ grows faster than $\ell_\xi$.
\end{enumerate}
When $\ell_x=\ell_\xi$, the operators $\op_N(a)$ and $\op_N(b)$, where 
$b(x)=x^2$, are unitarily equivalent via the discrete Fourier transform, see 
\eqref{eq:aFb}. Therefore, the operators have the same spectrum and the above
discussion applies to $\op_N(b)$ too. However, even when $\ell_x\neq\ell_\xi$ 
can one carry over the above analysis to $\sigma(\op_N(b))$; one only needs 
to swap $\ell_\xi$ and $\ell_x$.

For the operator $\op_N(h)$ we expect a qualitatively similar behaviour.
We first note that since
\begin{equation}
 \op_N(h) = \frac{1}{2}\bigl( \op_N(a) - \op_N(b) \bigr),
\end{equation}
and both $\op_N(a)$ and $\op_N(b)$ are positive operators, the min-max 
principle allows us to bound the spectrum of $\op_N(h)$ from above and 
below in terms of the upper bounds $E_{\rm a,max}=\ell_\xi^2/4$ and 
$E_{\rm b,max}=\ell_x^2/4$ for $\op_N(a)$ and $\op_N(b)$, respectively,
\begin{equation}
\label{eq:spechbound}
 -\frac{\ell_x^2}{8}\leq \op_N(h) \leq \frac{\ell_\xi^2}{8}.
\end{equation}
In the various cases of sending $\ell_\xi$ and/or $\ell_x$ to infinity we make 
the following observations:
\begin{enumerate}
\item {\it Continuum limit in finite volume}, i.e., $\ell_x$ fixed, $\ell_\xi\to\infty$: 
The explicit form \eqref{eq:matrix_elements_xi^2-x^2} of the matrix elements
of $\op_N(h)$ show that in this limit $\op_N(a)$ dominates, so that $\op_N(b)$
can be seen as a perturbation. Hence, as $\ell_\xi\to\infty$ the eigenvalues of
$\op_N(h)$ will asymptotically be given by (one half of) the right-hand side of
\eqref{eq:eigenvalues_xi^2} ($\nu$ fixed). This will describe large eigenvalues;
at the bottom of the spectrum the contribution of $\op_N(b)$ will have an effect.
As can be seen from \eqref{eq:spechbound}, in the limit the spectrum will
be contained in $[-\ell_x^2/8,\infty)$. We expect the limiting operator to be
\begin{equation}
 -\frac{\hbar^2}{2}\frac{\ud^2}{\ud x^2} - \frac{1}{2}x^2
\end{equation}
on the interval $[-\ell_x/2,\ell_x/2]$ with periodic boundary conditions.
\item {\it Infinite volume limit on the lattice 
$\frac{2\pi\hbar}{\ell_\xi}\mathbb{Z}$}, i.e., $\ell_\xi$ fixed, $\ell_x\to\infty$: 
This limit describes the opposite case to the previous one, in that $\op_N(a)$
will be a perturbation of $-\op_N(b)$. Asymptotically, the eigenvalues will be
given by $-\tfrac{1}{2}$ times the right-hand side of \eqref{eq:eigenvalues_xi^2}
with $\ell_x$ instead of $\ell_\xi$. The limiting spectrum will be contained in
$(-\infty,\ell_\xi^2/8]$.
\item {\it Continuum limit in infinite volume}, specifically, 
$\ell:=\ell_x=\ell_\xi\to\infty$: From \eqref{eq:matrix_elements_xi^2-x^2} one
obtains that all matrix elements of $\op_N(h)$ contain a factor $\ell^2$, hence
the eigenvalues will have the same factor and otherwise be independent of
$\ell$. The only conclusion one can draw from the bounds \eqref{eq:spechbound}
is that the limiting spectrum will be contained in $\rz$. We expect the limiting
operator to be $H$ \eqref{eq:quantHamilton}, whose spectrum is described
in Proposition~\ref{prop:specprop}.
\end{enumerate}
The semiclassical discussion in the following section will confirm this picture.

\section{Semiclassics}
\label{sec:semiclassics}
For a semiclassical analysis of the spectrum of $\op_N(h)$ we need to determine
some classical quantities, including the energy surface and periodic orbits of the 
Hamiltonian flow, as well as their actions and periods. We first note that 
\begin{equation}
 -\frac{\ell_x^2}{8}\leq E\leq\frac{\ell_\xi^2}{8},
\end{equation}
which is the classical equivalent to \eqref{eq:spechbound}. For these values 
of $E$ the energy surface is
\begin{equation}
 \Sigma_E = \left\{(x,\xi)\in\left(-\frac{\ell_x}{2},\frac{\ell_x}{2}\right)\times
 \left(-\frac{\ell_\xi}{2},\frac{\ell_\xi}{2}\right);\ \frac{\xi^2-x^2}{2}=E  \right\}.
\end{equation}
When $\ell_\xi=\ell_x$, the energy surface is always connected. Otherwise,
depending on the values for $\ell_\xi,\ell_x,E$ it is either connected or consists 
of two connected components. The latter case occurs when either
$\ell_\xi>\ell_x$ and 
\begin{equation}
\label{eq:posenrange}
 0<E<\frac{\ell_\xi^2-\ell_x^2}{8},
\end{equation}
or when $\ell_x>\ell_\xi$ and
\begin{equation}
 \frac{\ell_\xi^2-\ell_x^2}{8}<E<0.
\end{equation}
In all other cases $\Sigma_E$ is connected. Likewise, when $\Sigma _E$ is 
connected, there is only one primitive periodic orbit $p$ of the Hamiltonian flow.
Its period $t_p$ then is the volume (in Liouville measure) of $\Sigma_E$.
When $\Sigma_E$ has two components, however, there are two primitive 
periodic orbits $p$ and $p^{-1}$, where the latter is the time reversal of the former.
In this case the volume of $\Sigma_E$ is the sum of $t_p$ and $t_{p^{-1}}$,
i.e., $2t_p$.

When one keeps $\ell_x$ fixed and increases $\ell_\xi$, only in the small 
range $\ell_\xi^2-\ell_x^2<8E<\ell_\xi^2$ will $\Sigma_E$ be connected. In the
large energy range \eqref{eq:posenrange} $\Sigma_E$ has two components.
As $E$ grows, these components approach the two components of the energy 
surface 
\begin{equation}
 \Sigma_E^0 :=\left\{ (x,\xi);\ \xi=\pm\sqrt{2E}\right\}
\end{equation}
of $\xi^2/2$. A Bohr-Sommerfeld quantisation of the energy surfaces 
$\Sigma_E$ hence demonstrates that semiclassical approximations of large 
eigenvalues of $\op_N(h)$ in the range \eqref{eq:posenrange} approach 
semiclassical approximations of the eigenvalues of the operator
$-\frac{\hbar^2}{2}\tfrac{\ud^2}{\ud x^2}$ on the interval $[-\ell_x/2,\ell_x/2]$ 
with periodic boundary conditions. (The latter are identical with the exact 
eigenvalues \eqref{eq:eigenvalues_xi^2-lx_fix}.) Near the bottom of the spectrum 
one would approximate eigenvalues with Bohr-Sommerfeld quantisations of 
surfaces $\Sigma_E$ at small, positive values of $E$, which deviate essentially 
from the corresponding energy surfaces $\Sigma_E^0$. There will also be a 
limited range of negative energies with connected $\Sigma_E$ whose
quantisations may approximate negative eigenvalues. Thus, the semiclassical
picture conforms with our discussion in Section~\ref{sec:limits}.

When $\ell_x$ grows while $\ell_\xi$ is kept fixed, the above discussion applies
after having swapped $x$ and $\xi$ as well as positive $E$ and negative $E$.
In the case $\ell_x=\ell_\xi$ all energy surfaces are connected and there is a
complete symmetry in $x$ and $\xi$. After Bohr-Sommerfeld quantisation this 
leads to a semiclassical equivalent of Lemma~\ref{specsym}. 

We now calculate the action $S_p$ and period $t_p$ of the single primitive 
periodic orbit $p$ in the case of a connected energy surface at positive 
energy $E$,
\begin{equation}
 S_p = \oint_p \xi\,\ud x \qquad\text{and}\qquad t_p = \frac{\ud S_p}{\ud E}.
\end{equation}
At negative energy one has to swap $x$ and $\xi$, but otherwise gets the
same expressions. In the fundamental domain the primitive orbit $p$ consists 
of the two sections of the hyperbola $\xi^2-x^2=2E$ between the points 
$(x_-,\ell_\xi/2)$ and $(x_+,\ell_\xi/2)$, and $(x_+,-\ell_\xi/2)$ and $(x_-,-\ell_\xi/2)$, 
respectively, as shown in Figure~\ref{fig:energy_contours}(b). Here we have defined 
$x_\pm=\pm\sqrt{\ell_\xi^2/4 +2E}$. Thus, for the action we find
\begin{equation}
\label{eq:paction}
\begin{split}
 S_p &= 2\int_{x_-}^{x_+}\sqrt{x^2+2E}\,\ud x +\ell_\xi\bigl( x_+ -x_-\bigr)\\
         &= 4E\arsinh\sqrt{\frac{\ell_\xi^2}{8E}-1} - \frac{\ell_\xi}{2}
               \sqrt{\ell_\xi^2 -8E}.
\end{split}
\end{equation}
The derivative of this result yields the period,
\begin{equation}
\label{eq:pperiod}
 t_p = 4\arsinh\sqrt{\frac{\ell_\xi^2}{8E}-1}.
\end{equation}
With these classical quantities available, we can use them in the trace formula
that was proven in \cite{BEK16}. Choosing a test function $\rho\in C^\infty(\rz)$
with Fourier transform $\hat\rho\in C_0^\infty(\rz)$, denoting the eigenvalues
of $\op_N(h)$ as $E_n$, the Maslov index of the primitive periodic orbit $p$ 
as $\sigma_p$ \cite{BEK16}, and with the notation as above, this trace formula reads
\begin{equation}
 \sum_n\rho\left(\frac{E_n-E}{\hbar}\right) = \sum_{k\in\gz}\frac{t_p\,\hat\rho(kt_p)}{2\pi}
 \,\ue^{\frac{\ui}{\hbar}kS_p-\ui\frac{\pi}{2}k\sigma_p}+O(\hbar).
\end{equation}
Using a suitable Tauberian theorem \cite[Theorem 6.3]{BruPauUri95}, this expression 
allows us to determine the leading semiclassical behaviour of the local eigenvalue 
counting function to be
\begin{equation}
 N(E;r) := \#\{n;\ |E_n -E|\leq r\hbar\}
                \sim \frac{r}{\pi}\,t_p \quad\text{as}\quad\hbar\to 0.
\end{equation}
This means that the leading asymptotic behaviour of the spectral density
is given by
\begin{equation}
\label{eq:semicldense}
 d(E) \sim \frac{t_p}{2\pi\hbar} 
         = \frac{2}{\pi\hbar}\arsinh\sqrt{\frac{\ell_\xi^2}{8E}-1}.
\end{equation}
We now propose to link our length-parameters $\ell_\xi,\ell_x$ to the 
truncation \eqref{eq:pstrunc} of the phase space introduced by Berry and Keating. 
Their parameters $L_q,L_p$ measure the closest distance of the truncated 
hyperbola $qp=E$ to the coordinate axes. In our coordinates this would be the 
closest distance of the hyperbola $\xi^2-x^2=2E$ to the asymptotes $\xi=\pm x$.
The truncation at $E/L_p$ on the $q$-axis in \cite{BerKea99a} then corresponds 
to the distance 
\begin{equation}
 \sqrt{2}\frac{\ell_\xi}{2}-L_p
\end{equation}
on the asymptote $\xi=x$, see Figure~\ref{fig:energy_contours}(c) for an illustration. 
Equating these quantities and using the value $L_p=\sqrt{2\pi}$ (in units where $\hbar=1$) 
as in \cite{BerKea99a} gives
\begin{equation}
\label{rel1}
 \ell_\xi = \frac{E+2\pi}{\sqrt{\pi}},
\end{equation}
hence linking the torus-length scale to the energy. With this identification 
the action \eqref{eq:paction} and the period \eqref{eq:pperiod} become
\begin{equation}
\begin{split}
 S(E) &= 4E\arsinh\left(\frac{E-2\pi}{\sqrt{8\pi E}}\right) -\frac{E^2}{2\pi} +2 \\
         &= 2E\log\frac{E}{2\pi} -\frac{E^2}{2\pi} +2
\end{split}
\end{equation}
and 
\begin{equation}
\label{t_p}
t_p = 4E\arsinh\left(\frac{E-2\pi}{\sqrt{8\pi E}}\right)
      =2\log\left(\frac{E}{2\pi}\right).
\end{equation}
Using this in the expression on the right-hand side of \eqref{eq:semicldense}
then leads us to expect a logarithmic mean spectral density
\begin{equation}
\label{eq:sc_mean_density}
 \bar d(E) = \frac{1}{\pi}\,\log\left(\frac{E}{2\pi}\right).
\end{equation}
We note that this is twice the value which Berry and Keating \cite{BerKea99a}
found in their model. The factor of two arises from the fact that in
\cite{BerKea99a} only one branch of the hyperbola was considered, whereas
we have taken both branches into account, see Figure~\ref{fig:energy_contours}.

\section{Numerical results}
\begin{figure}[t]
\begin{center}
\includegraphics[width=.5\textwidth]{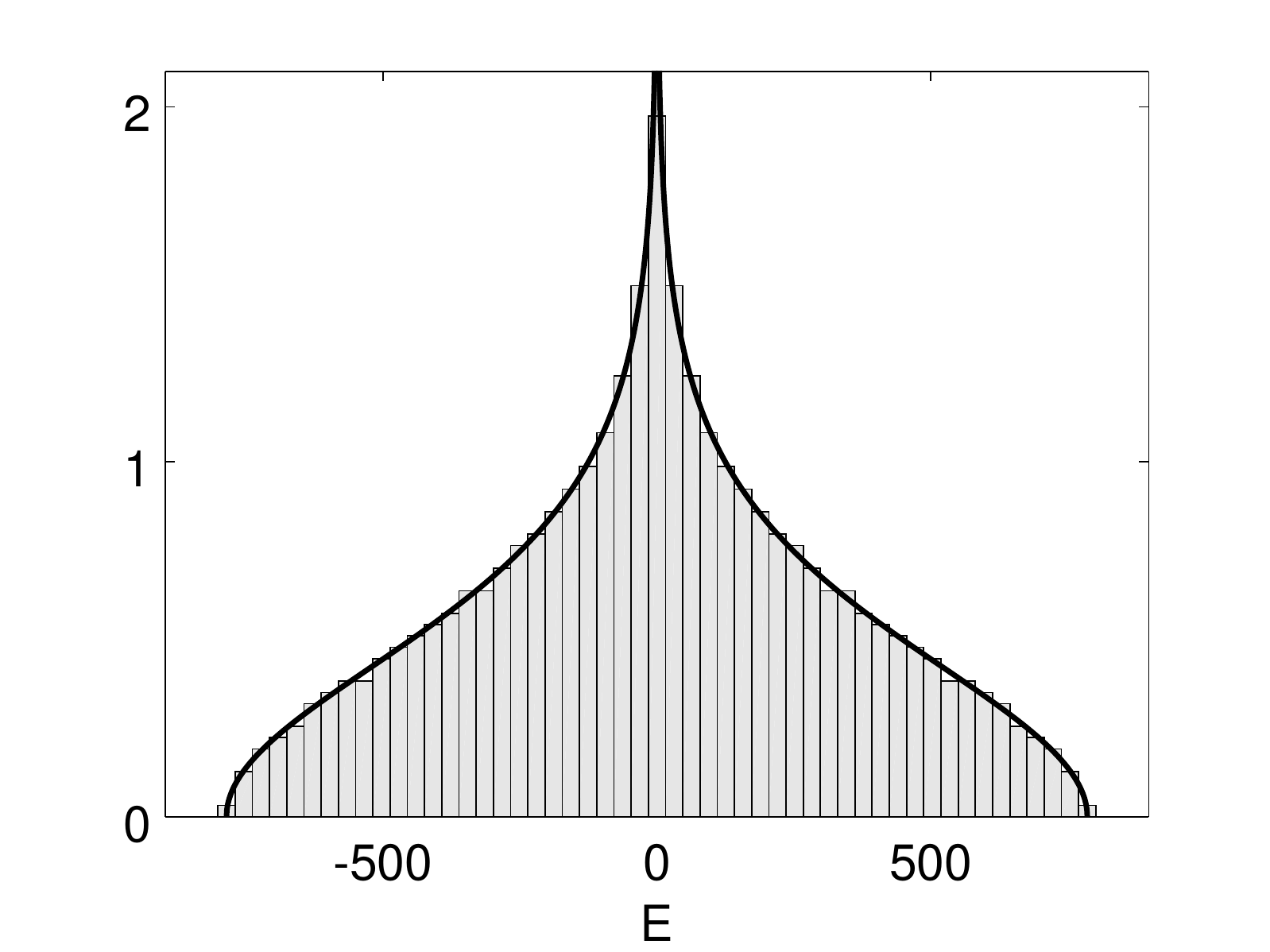}
\caption{Histogram (grey) of the eigenvalues of $\op_N(h)$ for
  $N=1000$, $\hbar=1$ and $\ell_x=\ell_\xi=\sqrt{2\pi N}$ compared
  to the semiclassical estimate (thick black line) for the density
  \eqref{eq:semicldense}.}
\label{fig:pickelhaube}
\end{center}
\end{figure}
\begin{figure}[t]
\begin{center}
\includegraphics[width=.5\textwidth]{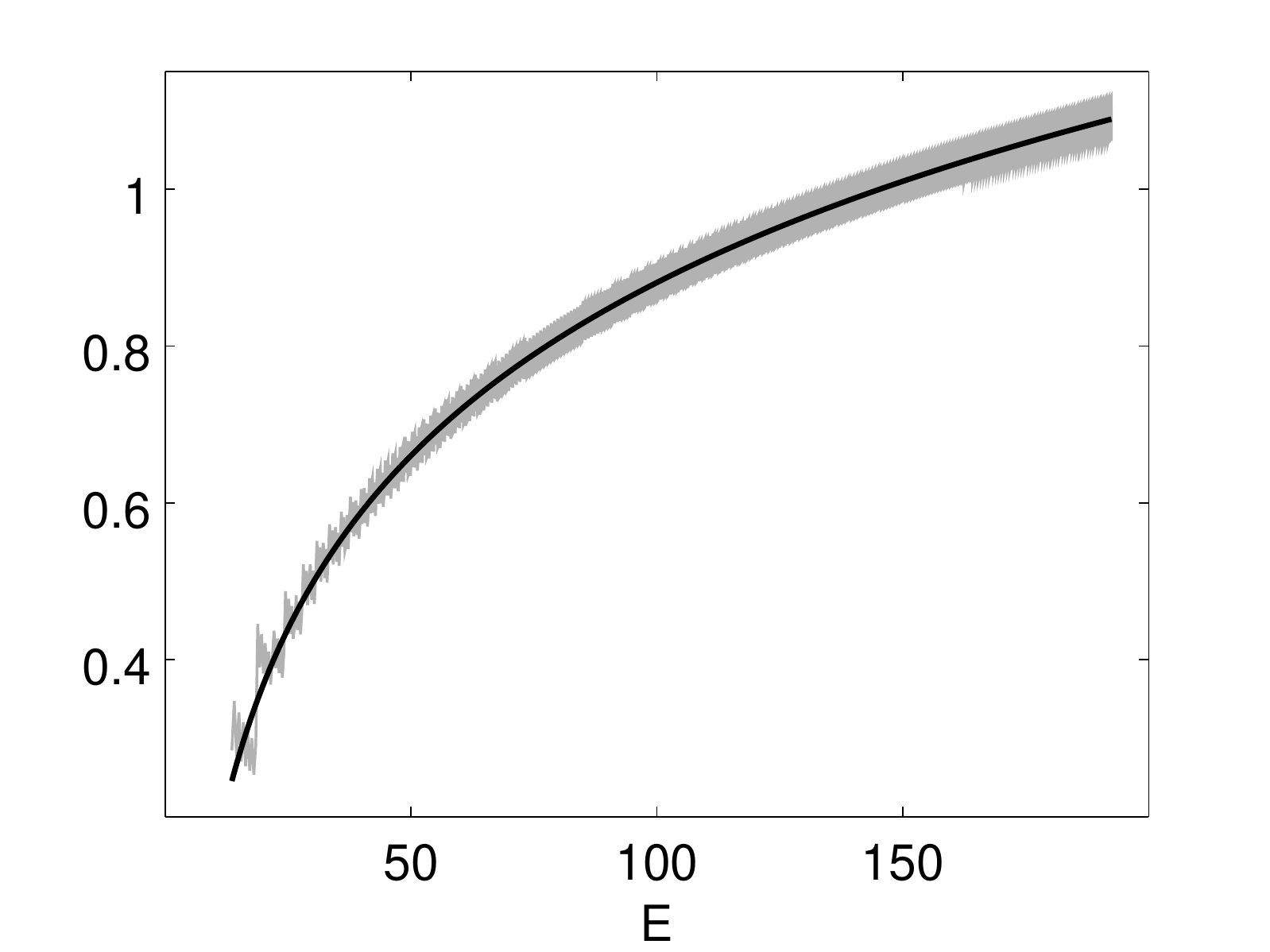}%
\includegraphics[width=.5\textwidth]{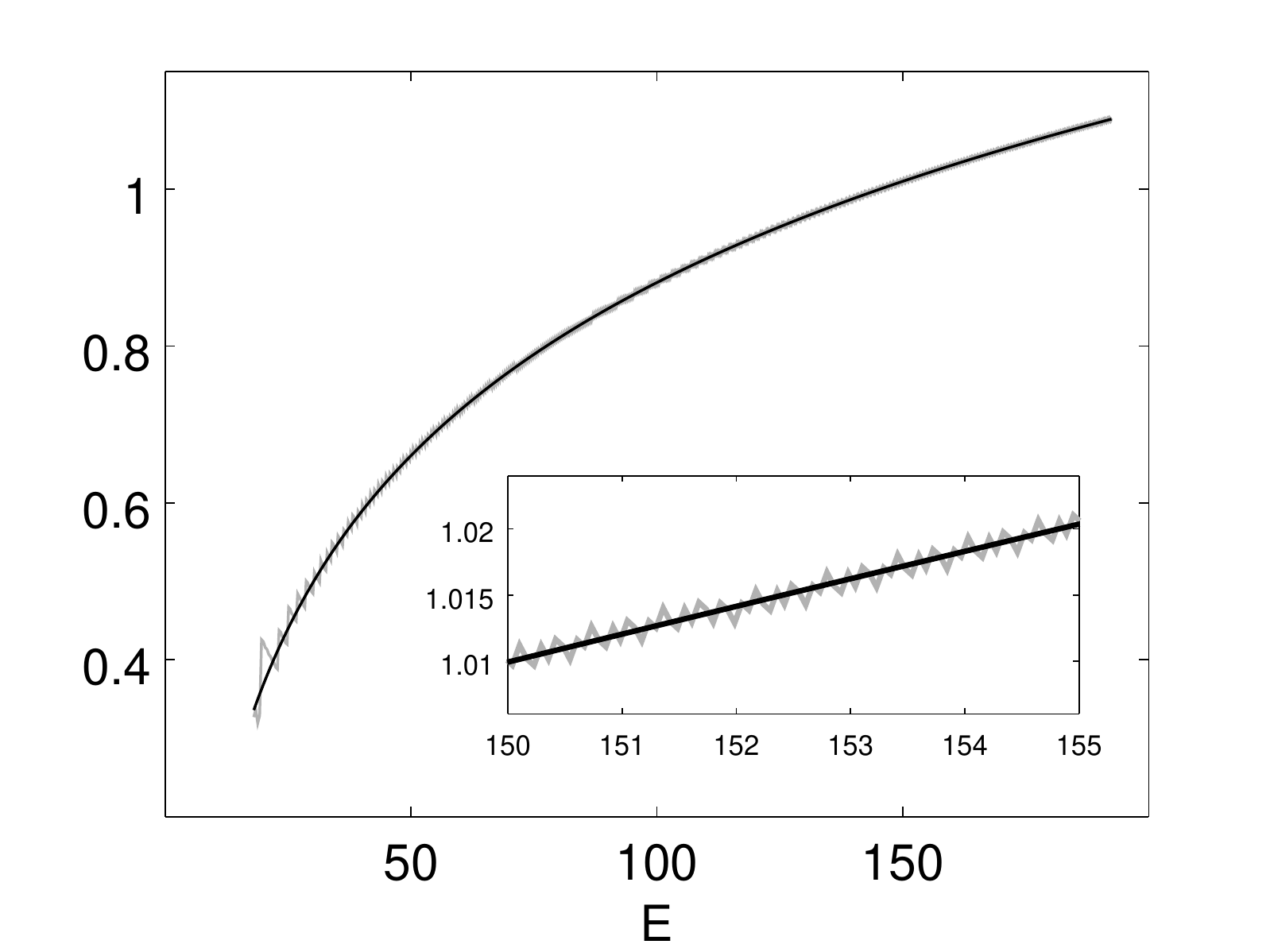}
\caption{Numerical estimate $d^K$ (grey) of the mean spectral density
  compared to the semiclassical asymptotics $\bar{d}$ (black), see
  Eq.~\eqref{eq:sc_mean_density}. The left panel is for $K=2$; one
  observes fluctuations of $d^K$ around $\bar{d}$. For $K=3$ (right
  panel) the lines are hardly distinguishable at higher energies; small
  rapid fluctuations are visible upon magnification (inset).}
\label{fig:dquer}
\end{center}
\end{figure}
In order to illustrate the findings of Section~\ref{sec:semiclassics} we
have numerically diagonalised the Hamiltonian $\op_N(h)$ in the form
\begin{equation}
\label{eq:ophfinite}
 \op_N(h) = \frac{\pi}{2N}\sum_{m=1}^{N-1}
  \frac{(-1)^m}{\sin^2\!\left(\pi\frac{m}{N}\right)}
  \left(T^{m,0}-T^{0,m}\right) \left\{ \begin{matrix} 
       1 & \text{if $N$ even} \\[.5ex] 
       \cos\left(\pi\frac{m}{N}\right) & \text{if $N$ odd} \end{matrix}
     \right. \ . 
\end{equation}
Here, using Lemma~\ref{lemma:finite_symbols} and Eq.~\eqref{eq:gm0} in
Appendix~\ref{appB}, the infinite sum in \eqref{eq:hamitlonweyl} has
been converted into a finite sum.  Furthermore, we have chosen
$\hbar=1$ and $\ell_x=\ell_\xi=\sqrt{2\pi N}$.  This choice ensures
that the condition \eqref{eq:quant} is satisfied, as well as that the
continuum and the infinite-volume limits are performed
simultaneously. The matrix elements of this operator, that we have
used for the numerical diagonalisation, are given in
Eq.~\eqref{eq:matrix_elements_xi^2-x^2}.

We first test the accuracy of the semiclassical arguments leading to
the estimate \eqref{eq:semicldense} for the spectral density. To this
end we show a histogram of the eigenvalues of $\op_N(h)$ for $N=1000$
in Figure~\ref{fig:pickelhaube}. The overall behaviour is well described 
by Eq.~\eqref{eq:semicldense}, which is plotted for comparison.

Now we want to compare the logarithmic mean density
\eqref{eq:sc_mean_density} to the numerical data, where we have
to satisfy the condition~\eqref{rel1}. However, since we have already
fixed $\ell_\xi$ in terms of $N$, this relation now reads
\begin{equation}
  \sqrt{2\pi N}=\frac{E+2\pi}{\sqrt{\pi}} \, .
\end{equation}
It can thus be implemented in the following way. For each value of $N$
-- recall that $N$ labels matrices of different size, and thus
different spectra -- we have to determine the spectral density at the
energy
\begin{equation}
  E =\pi\sqrt{2N}-2\pi \, .
\end{equation}
The latter we do as follows. We first determine the $K$ eigenvalues
which are closest to the given energy $E$. Of these we determine the
minimal and the maximal value $E^K_\mathrm{min}$ and
$E^K_\mathrm{max}$, respectively. We then estimate the local spectral
density as
\begin{equation}
  d^K(E) = \frac{K-1}{E^K_\mathrm{max}-E^K_\mathrm{min}} \, .
\end{equation}
In Figure~\ref{fig:dquer} we compare the spectral densities $d^K$
obtained from spectra with $10K \leq N \leq 2000$ to the logarithmic
semiclassical estimate \eqref{eq:sc_mean_density} and observe a good
agreement.

\section{Conclusions}
\label{sec:conclusions}
Our goal was to introduce a self-adjoint quantum Hamiltonian 
as a modification of the Berry-Keating operator 
\cite{BerKea99a,BerKea99b} that has, to leading order, an eigenvalue 
count resembling the logarithmic law of the Riemann zeros. In order 
to achieve this we followed the suggestion of a phase-space truncation
made in \cite{BerKea99a,BerKea99b}. In contrast to previous approaches 
\cite{SieRod11,BerKea11} we did not modify the classical Hamiltonian 
to achieve compact energy surfaces, but we replaced the non-compact 
phase space $\T^\ast\!\rz$ by a torus. We then demonstrated that Weyl 
quantisation on a torus provides a powerful tool to construct operators 
and to analyse their properties. This setting enabled us to use
the three available parameters, the two length scales and the semiclassical
parameter, to perform various limits, including a continuum and an 
infinite-volume limit.

Representing the quantum Hamiltonian in Weyl quantisation offers the 
opportunity to use semiclassical methods in order to study spectral properties.
Based on the Gutzwiller trace formula for our setting \cite{BEK16}, we
derived a semiclassical expression for the spectral density which, however, 
is not of the desired logarithmic form. It rather expresses the spectral 
density in the vicinity of a fixed energy $E$, with fixed length parameters
and in a semiclassical approximation. However, it does fit a numerically
computed spectral density very well. As in the (semi-)classical
calculation provided in \cite{BerKea99a,BerKea99b}, we had to link the
torus length scales to the energy; this can be done in a similar way as 
in \cite{BerKea99a,BerKea99b}, producing the same leading term.

Our model provides a clean and direct implementation of the original 
proposal made by Berry and Keating, avoiding ad hoc modifications as 
well as non-local terms in the quantum Hamiltonian as they were suggested 
in \cite{SieRod11,BerKea11}. The original model was based on the work 
of Connes \cite{Con96,Con99}, and central to it was the idea to provide 
a quantisation of the classical Hamiltonian $H(q,p)=qp$, with the expectation 
that the hyperbolic nature of the classical dynamics would provide 
the necessary instability to produce a spectrum that would be able to 
mimic the `spectrum' of Riemann zeros, whose correlations on the scale 
of the mean level spacing can be modelled by random-matrix theory 
(see, e.g., \cite{BerKea99b} and references therein). In one degree of 
freedom, however, classical dynamics are integrable and, therefore, one 
cannot truly expect such a mimicking to happen. This is reflected by
the fact that we had to use a family of operators, parametrised by 
the semiclassical parameter $N$, and to link the size of the torus to 
the energy $E$. We then counted eigenvalues around $E$ in an $N$-dependent 
way, eventually leading to the desired logarithmic spectral density. 
In that way our analysis clearly demonstrates the limitations of 
modelling the Riemann zeros with the eigenvalues of an operator that 
is a quantisation (of an established nature) of $H(q,p)=qp$.

\vspace*{0.5cm}
\subsection*{Acknowledgements}
This research was supported through the programme {\it Research in
Pairs} by the Mathematisches Forschungsinstitut Oberwolfach in 2016.
A first stage of the work was supported by Grant no.\ 41534 of the 
London Mathematical Society.

\appendix 
\section{Proof of Lemma~\ref{specsym}}
\label{appA2}
Along the symbol $a(\xi)$ defined in \eqref{eq:symbol_xi^2}, whose Fourier 
series is \eqref{eq:Fourier_xi^2}, we define
\begin{equation}
\label{eq:symbol_x^2}
  b(x) = (x-n\ell_x)^2 ,\quad \text{if} \quad 
  \left( n - \tfrac{1}{2} \right) \ell_x \leq x 
  < \left( n + \tfrac{1}{2} \right) \ell_x \, ,\quad n\in\gz,
\end{equation}
with Fourier expansion
\begin{equation}
\label{eq:Fourier_x^2}
  b(x) = \frac{\ell_x^2}{12} + \frac{\ell_x^2}{\pi^2} 
             \sum_{n=1}^\infty \frac{(-1)^n}{n^2} 
               \cos\left( \frac{2\pi}{\ell_x}nx\right),
\end{equation}
and quantisation
\begin{equation}
\label{eq:opNb}
  \op_N(b) = \frac{\ell_x^2}{12} + \frac{\ell_x^2}{2\pi^2} 
             \sum_{n=1}^\infty \frac{(-1)^n}{n^2} \,
               \bigl(T^{0,n}+T^{0,-n}\bigr) .
\end{equation}
Hence, $h(x,\xi)=\tfrac{1}{2}(a(\xi)-b(x))$.

We also introduce the discrete Fourier transform $F$, which is a unitary 
operator on $\kz^N$ defined as
\begin{equation}
 (F\psi)_k := \frac{1}{\sqrt{N}}\sum_{l=0}^{N-1}
                    \ue^{-2\pi\ui\frac{kl}{N}}\,\psi_l\, ,
\end{equation}
where we recall the periodicity $\psi_{l+N}=\psi_l$. A short calculation 
then shows that
\begin{equation}
 F^{-1}T^{m,0}F = T^{0,m},\qquad\text{hence}\qquad FT^{0,m}F^{-1} = T^{m,0}.
\end{equation}
When $\ell_x=\ell_\xi$, a comparison of \eqref{eq:opNa} and \eqref{eq:opNb} 
hence yields that
\begin{equation}
\label{eq:aFb}
 F^{-1}\op_N(a)F = \op_N(b).
\end{equation}
Moreover, since $F^2$ is a parity operator, $(F^2\psi)_l = \psi_{-l}$, it follows that
\begin{equation}
 F^2\op_N(b)F^{-2} = \op_N(b),\quad\text{hence}\quad 
 \op_N(a) = F\op_N(b)F^{-1} = F^{-1}\op_N(b)F.
\end{equation}
Thus,
\begin{equation}
\begin{split}
 F^{-1}2\op_N(h)F &= F^{-1}\op_N(a)F - F^{-1}\op_N(b)F \\
                              &= \op_N(b) - \op_N(a) = -2\op_N(h).
\end{split}
\end{equation}
As $F^{-1}\op_N(h)F$ and $\op_N(h)$ have the same spectra, the lemma
is proven. 
\qed
 
\section{Matrix elements}
\label{appB}
In this appendix we calculate the matrix elements $\op_N(h)_{k,l}$ of the 
operator with Weyl symbol \eqref{eq:hamiltonper}. 

The matrix elements $A_{k,l}$ of an operator $A$ are defined by
\begin{equation}
  (A \psi)_k = \sum_{-\frac{N}{2}\leq k<\frac{N}{2}} A_{k,l} \, \psi_l \, .
\end{equation}
In order to calculate the matrix elements of a Weyl operator
\eqref{q34} we need the matrix elements of the unitary phase-space
translation operators $T^{m,n}$. From Eqs.~\eqref{Weylnm} and
\eqref{q32} one concludes that
\begin{equation}
 (T^{m,n}\psi)_k 
  = \sum_{-\frac{N}{2}\leq k<\frac{N}{2}} \ue^{-\ui\pi\frac{nm}{N}} \, 
    \ue^{-2\pi\ui\frac{kn}{N}} \, \delta_{k+m,l} \, \psi_l \, ,
\end{equation}
from which the matrix elements are read off as
\begin{equation}
\label{eq:translatmatrix}
   (T^{m,n})_{k,l} =  \ue^{-\ui\pi\frac{nm}{N}} \, 
    \ue^{-2\pi\ui\frac{kn}{N}} \, \delta_{k+m,l}\, .
\end{equation}
Here it is understood that our convention $\psi_{l+N}=\psi_l$ implies
that all indices are to be taken modulo $N$. 

Since phase space is a torus, the resulting periodicity of translations 
imply that for fixed $N$ we can always view a Weyl operator as the
quantisation of a symbol that is a finite trigonometric polynomial
instead of the in general infinite series \eqref{q34}.
\begin{lemma}
\label{lemma:finite_symbols}
Let $f \in C(\rz^2/(\ell_x\gz\oplus\ell_\xi\gz))$ be a Weyl
symbol with quantisation $\op_N(f)$. Then there exists a (generally
$N$-dependent) symbol $g$ that is a finite trigonometric polynomial,
such that
\begin{equation} 
  \op_N(f) = \op_N(g) \, .
\end{equation}
\end{lemma}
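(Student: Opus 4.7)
The plan is to exploit the aliasing inherent in Weyl quantisation on a finite lattice: the translation operators $T^{m,n}$ are periodic in $(m,n)$ modulo $N$ up to explicit signs, so any Fourier series on the torus collapses, at the operator level, to a finite sum over a fundamental domain of the aliasing lattice.

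First I would establish the periodicity identities
\[
 T^{m+N,\,n} = (-1)^n\, T^{m,n}, \qquad T^{m,\,n+N} = (-1)^m\, T^{m,n},
\]
which follow directly from \eqref{Weylnm} and \eqref{q32} after noting that $T^{N,0}$ and $T^{0,N}$ both reduce to the identity on $\kz^N$ because indices are taken modulo $N$. Iterating these gives, for any $(m_0,n_0) \in \mathcal{F}:=\{0,\dots,N-1\}^2$ and $(j,k) \in \gz^2$,
\[
 T^{m_0+jN,\,n_0+kN} = (-1)^{km_0+jn_0+jkN}\, T^{m_0,n_0}.
\]
Decomposing each $(m,n)\in\gz^2$ uniquely as $(m_0+jN,\, n_0+kN)$ with $(m_0,n_0)\in\mathcal{F}$, substituting into \eqref{q34} and reordering then yields
\[
 \op_N(f) = \sum_{(m_0,n_0)\in\mathcal{F}} g_{m_0,n_0}\, T^{m_0,n_0}, \qquad
 g_{m_0,n_0} := \sum_{j,k\in\gz} (-1)^{km_0+jn_0+jkN}\, f_{m_0+jN,\,n_0+kN}.
\]
Reading these finitely many numbers as the non-zero Fourier coefficients of the trigonometric polynomial $g(x,\xi) = \sum_{(m_0,n_0)\in\mathcal{F}} g_{m_0,n_0}\,\ue^{2\pi\ui(m_0\xi/\ell_\xi - n_0 x/\ell_x)}$ then gives $\op_N(g) = \op_N(f)$ by construction.

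The main obstacle is justifying the double-series rearrangement when $f$ is only continuous, since the Fourier coefficients $f_{mn}$ need not be absolutely summable. I would handle this at the level of matrix elements: by \eqref{eq:translatmatrix}, for fixed $k,l$ only $f_{mn}$ with $m\equiv l-k\pmod{N}$ contribute to $(\op_N(f))_{k,l}$, so one summation immediately collapses to a single residue class and only a one-parameter sum over $n$ remains. Convergence of this remaining sum can be secured either by a mild regularity hypothesis on $f$ (e.g.\ $f\in C^2$, making the Fourier coefficients absolutely summable), or by approximating $f$ uniformly by trigonometric polynomials $f^{(K)}$ (Fej\'er sums), applying the rearrangement to each $f^{(K)}$, and passing to the limit. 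The limiting step is legitimate because the $N^2$ operators $T^{m_0,n_0}$, $(m_0,n_0)\in\mathcal{F}$, form a basis of $\Mat_N(\kz)$, so norm convergence $\op_N(f^{(K)})\to\op_N(f)$ is equivalent to convergence of each coefficient $g^{(K)}_{m_0,n_0}$, producing the required $g$ in the limit.
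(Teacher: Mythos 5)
Your proposal is correct and follows essentially the same route as the paper: establish the aliasing identity $T^{m+\mu N,\,n+\nu N}=(-1)^{m\nu+n\mu+\mu\nu N}\,T^{m,n}$, fold the Fourier sum over a fundamental domain of $N\gz\times N\gz$, and read off the finitely many coefficients $g_{m,n}$ as aliased sums of the $f_{m,n}$. Your additional care about the rearrangement for merely continuous $f$ (reduction to matrix elements, or approximation by Fej\'er sums using that the $T^{m,n}$ with $0\leq m,n<N$ span $\Mat_N(\kz)$) is a legitimate refinement of a point the paper's proof passes over silently.
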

%
We remark that a similar statement can be found in
\cite[Theorem~1]{Lig16}.
\begin{proof}
We explicitly construct $g$. First observe the following periodicity
property of the phase space translations,
\begin{equation}
  T^{m+\mu N,n+\nu N} = (-1)^{m\nu+n\mu+\mu\nu N} \, T^{m,n} \, , 
\end{equation}
which follows from \eqref{eq:translatmatrix} by a direct computation.
Hence, 
\begin{equation}
\begin{split}
  \op_N(f) 
  &= \sum_{m,n\in\gz} f_{m,n} T^{m,n}
   = \sum_{m,n=0}^{N-1} \, \sum_{\mu,\nu\in\gz} 
     f_{m+\mu N,n+\nu N} T^{m+\mu N,n+\nu N} \\
  &= \sum_{m,n=0}^{N-1} \, \sum_{\mu,\nu\in\gz} 
     f_{m+\mu N,n+\nu N} \, (-1)^{m\nu+n\mu+\mu\nu N} \, T^{m,n} \, .
\end{split}
\end{equation}   
Defining 
\begin{equation}
  g_{m,n} 
  = \sum_{\mu,\nu\in\gz} f_{m+\mu N,n+\nu N} \, (-1)^{m\nu+n\mu+\mu\nu N} 
  \, , \quad 0 \leq m,n < N-1 \, , 
\end{equation}
concludes the proof.
\end{proof}
We now use this result to determine the matrix elements of
$\op_N(h)$, see~\eqref{q34}. To this end we first choose $f(x,\xi) =
a(\xi)/2$, with $a$ defined in Eq.~\eqref{eq:symbol_xi^2}, for the
symbol in Lemma~\ref{lemma:finite_symbols}. Then $f_{m,n}=a_m
\delta_{n0}$ and, likewise, $g_{m,n}=0$ for all $n\neq0$. For $m\neq0$
we have, cf.~\eqref{eq:Fourier_xi^2},
\begin{equation}
\begin{split}
\label{eq:gm0}
  g_{m,0} 
  &= \frac{\ell_\xi^2}{4\pi^2} \sum_{\mu\in\gz} 
     \frac{(-1)^{m+\mu N}}{(m+\mu N)^2}
   = \frac{\ell_\xi^2 (-1)^m}{4\pi^2N^2} \sum_{\mu\in\gz} 
     \frac{(-1)^{\mu N}}{(\mu + \frac{m}{N})^2}\\
  &= \frac{\ell_\xi^2 (-1)^m}{4N^2\sin^2\!\left(\pi\frac{m}{N}\right)}  
     \left\{ \begin{matrix} 
       1 & \text{if $N$ even} \\[.5ex] 
       \cos\left(\pi\frac{m}{N}\right) & \text{if $N$ odd} \end{matrix}
     \right. \, ,
\end{split}
\end{equation}
where we have used \cite[Eq.~5.15.6]{DLMF} to evaluate the
sums. The remaining coefficient is
\begin{equation}
\begin{split}
\label{eq:g00}
  g_{0,0} 
  = \frac{\ell_\xi^2}{24} + \frac{\ell_\xi^2}{2\pi^2} 
    \sum_{\mu=1}^\infty \frac{(-1)^{\mu N}}{\mu^2 N^2}
  &= \frac{\ell_\xi^2}{24N^2} 
     \left\{ \begin{matrix} 
       (N^2+2) & \text{if $N$ even} \\[1ex] 
       (N^2-1) & \text{if $N$ odd} \end{matrix}
     \right. \ .
\end{split}
\end{equation}

The second term of the symbol~\eqref{eq:hamiltonper}, $h-a/2=b/2$,
cf.~\eqref{eq:symbol_x^2}, is a function of $x$ only, and thus its
quantisation is diagonal in the chosen representation,
\begin{equation}
  \big( \op_N(b/2)\psi\big)_l 
  = -\frac{1}{2} \left(\frac{l}{N}\ell_x\right)^2 \psi_l \, .
\end{equation} 

Finally, the matrix elements of $\op_N(h)$ read
\begin{equation}
\label{eq:matrix_elements_xi^2-x^2}
  \op_N(h)_{k,l} 
  = \left( g_{0,0}-\frac{1}{2}\left(\frac{k}{N}\ell_x\right)^{\!2}\,\right) 
    \delta_{k,l} 
    +\sum_{m=1}^{N-1} g_{m,0} \, \delta_{k+m,l} \, , 
\end{equation}
with the coefficients $g_{m,0}$ given in Eqs.~\eqref{eq:gm0} and
\eqref{eq:g00}.

{\small \bibliographystyle{amsalpha} \bibliography{literature}}
\end{document}